\theoremstyle{definition}
\newtheorem{lemma}{Lemma}
\newtheorem{proposition}{Proposition}
\newtheorem{theorem}{Theorem}
\newtheorem*{prop}{Proposition}
\newcommand{\msf}{\mathsf}
\newcommand{\mbf}{\mathbf}
\newcommand{\mbb}{\mathbb}
\newcommand{\mc}{\mathcal}
\newcommand{\tr}{\textrm{Tr}}
\newcommand{\ket}[1]{|#1\rangle}
\newcommand{\bra}[1]{\langle #1|}
\newcommand{\op}[2]{|#1\rangle\langle #2|}
\renewcommand{\ol}{\overline}
\newcommand{\stab}{\text{stab}}
\definecolor{cool_green}{rgb}{0.0, 0.5, 0.0}
\definecolor{new_pink}{rgb}{1, 0.0, 1}
\begin{document}

\title{Multipartite Nonlocality in Clifford Networks}

\author{Amanda Gatto Lamas}\email{amandag6@illinois.edu}
\affiliation{Department of Physics, University of Illinois at Urbana-Champaign, Urbana, IL 61801, USA}

\author{Eric Chitambar}\email{echitamb@illinois.edu}
\affiliation{Department of Electrical and Computer Engineering, Coordinated Science Laboratory, University of Illinois at Urbana-Champaign,  Urbana,  IL 61801, USA}

	\date{\today}
	
\begin{abstract}

We adopt a resource-theoretic framework to classify different types of quantum network nonlocality in terms of operational constraints placed on the network.  One type of constraint limits the parties to perform local Clifford gates on pure stabilizer states, and we show that quantum network nonlocality cannot emerge in this setting.  Yet, if the constraint is relaxed to allow for mixed stabilizer states, then network nonlocality can indeed be obtained.  We additionally show that bipartite entanglement is sufficient for generating all forms of quantum network nonlocality when allowing for post-selection, a property analogous to the universality of bipartite entanglement for generating all forms of multipartite entangled states.

%In an $N$-party quantum network, non-classical correlations can arise when locally measuring across $k$-party entangled states ($k<N$) that are independently distributed.  Such correlations are said to embody quantum network nonlocality.  In this paper, we adopt a resource-theoretic framework to classify different types of quantum nonlocality in terms of operational constraints placed on a network.  One type of constraint limits the parties to perform local Clifford gates on pure stabilizer states, and we show that quantum network nonlocality cannot emerge in this setting.  Yet, if the constraint is relaxed to allow for mixed stabilizer states, then network nonlocality can indeed be obtained. We additionally show that bipartite entanglement (i.e. $k=2$) is sufficient for generating all forms of quantum network nonlocality when allowing for post-selection, a property analogous to the universality of bipartite entanglement for generating all forms of multipartite entangled states.

%In this paper, we show that bipartite entanglement (i.e. $k=2$) is sufficient for generating all forms of quantum network nonlocality when allowing for post-selection, a feature that might not arise when using more general non-signaling devices. We then demonstrate that network nonlocality fails to emerge in quantum networks limited strictly to pure stabilizer states and Clifford gates. 
%This work sheds new light on the distinction between non-genuine and genuine network nonlocality by showing that all known approaches for generating the latter require some form of non-Clifford operation.

\end{abstract}

\maketitle

%A fundamental problem in quantum information science is to understand which quantum processes can be efficiently simulated by classical means and which cannot.  This question can look rather different for different types of tasks.  For instance, in quantum computing, much effort is devoted to discovering new classical algorithms that can closely reproduce the performance of their quantum counterpart.  On the other hand, in the study of Bell nonlocality one tries to construct local hidden variable (LHV) models that classically reproduce the local measurement statistics on entangled states.  In both cases, there are limits on how much can be classically simulated, but the exact limitations are currently unknown.

%In this paper, we identify a link between the classical simulation of quantum algorithms and the existence of multipartite LHV models.  

The Gottesman-Knill Theorem is a classic result that enables a wide class of quantum algorithms to be efficiently simulated \cite{Gottesman-1998a, Aaronson-2004a, Anders-2006a}.  It says that circuits constructed using only gates belonging to the Clifford group can be efficiently simulated on a classical computer using the stabilizer formalism.  It is interesting to investigate whether Clifford quantum information processing has similar limitations for other quantum information tasks and phenomena, such as the emergence of quantum nonlocality.  It has already been shown \cite{Pusey2010, Tessier2005, Howard2012, Howard2015} that such limitations do indeed exist in specific scenarios, but the role of Clifford operations in the emergence of network nonlocality has remained relatively unexplored.  %\amanda{This work finds that all multipartite correlations generated on networks restricted to stabilizer states and local Clifford gates can be simulated using a classical randomness on the same network.}%  Interestingly, there are subtleties to this conclusion that touch on the distinction between so-called genuine and non-genuine network nonlocality, as we will discuss below. 

The standard Bell nonlocality scenario consists of multiple parties having access to some globally-shared entangled state, and they each select different measurements to perform on their respective subsystems \cite{Brunner-2014a}.  In the network setting, the globally-shared entanglement is replaced by independent sources of entanglement that get distributed according to the structure of the network \cite{Branciard-2010a, Bancal-2011a, Branciard-2012a, Gisin-2019a, Tavakoli-2022a}. 
Examples of nonlocality in networks have recently been found that appear to be fundamentally different than the nonlocality emerging in traditional Bell scenarios \cite{Renou-2019b, Tavakoli-2021a, Renou-2022a, Renou-2022b, Pozas-Kerstjens-2022a, Pozas-Kerstjens-2022b}, although how to best articulate this difference is unclear.

To shed light on this question, we begin here by sketching a ``top-down'' framework for the general study of quantum nonlocality (or ``nonclassicality''), inspired by the philosophy of quantum resource theories \cite{Horodecki-2012a, Chitambar-2019a}.  In this framework, different classes of nonlocality naturally emerge on a quantum network after placing different restrictions on the type of states that are ``free'' to distribute across the network, as well as the type of local operations that the parties are free to perform.  This allows us to view different notions of nonlocality proposed in the literature under a common conceptual lens.  As our main result, we show later in this letter that quantum nonlocality can never be realized whenever the parties are restricted to Clifford operations and the free states are stabilizer states. 

\textit{An operational framework for network nonlocality.}  
\begin{figure}[t]
\centering
\includegraphics[scale=0.69]{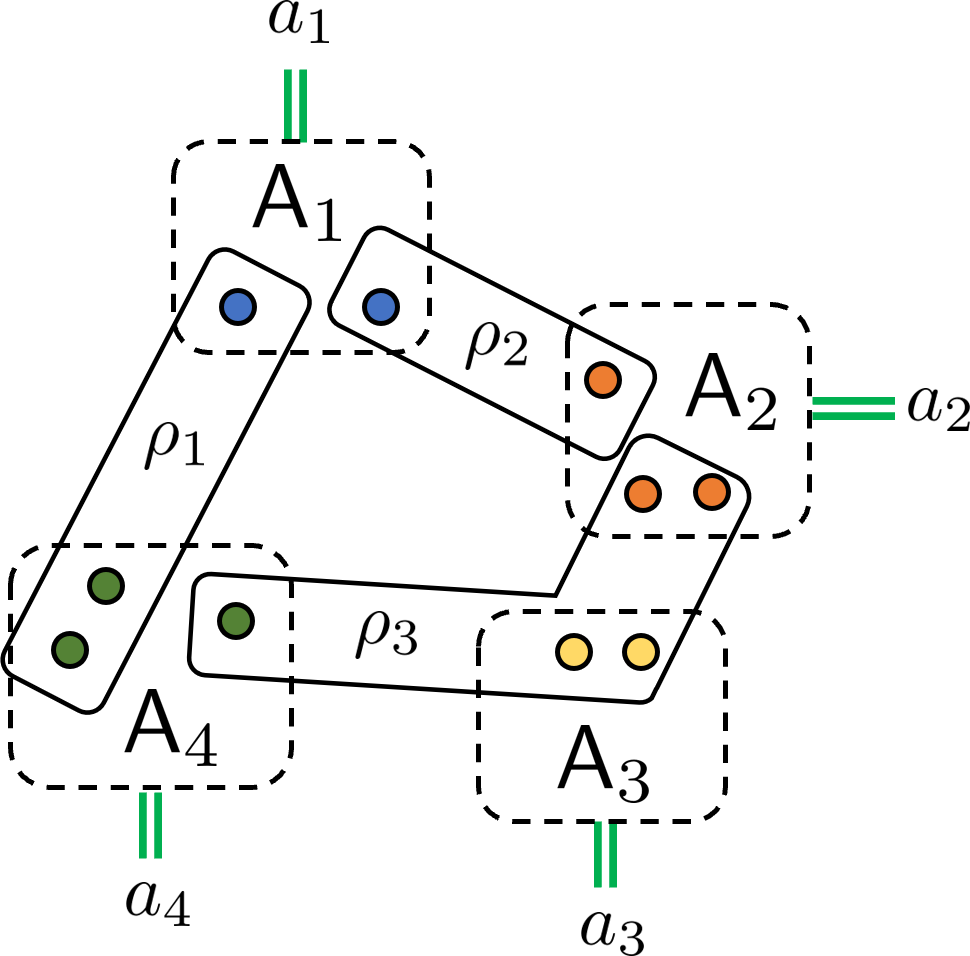}
\caption{A 4-party quantum network partitioned into disjoint hyperedges.  Each node represents a different quantum system and each hyperedge represents a quantum state shared among the constituent nodes.  Each party $\msf{A}_i$ has local control of certain systems.}
\label{Fig:hypergraph}
\end{figure}
Consider an $N$-party quantum network whose structure is defined by a hypergraph $\mc{G}=(V,E)$ with a disjoint hyperedge set $E\subset 2^{V}$.  Each vertex $v\in V$ represents a quantum system, and each hyperedge $e\subset V$ represents an independent quantum source that generates a joint state $\rho_e$ for all $|e|$ systems in $e$.  The network structure also specifies which quantum systems are received locally by each party.  We let $A_i\subset E$ denote the collection of hyperedges connected to party $\msf{A}_i$ (see Fig. \ref{Fig:hypergraph}). %For $i=1,\cdots,N$, let $A_i\subset V$ denote the systems under the local control of party $\msf{A}_i$.  
In any protocol for generating network correlations, each party $\msf{A}_i$ applies a quantum channel $\mc{E}_i$ jointly across all its received systems, thereby connecting the previously disjoint hyperedges.  The parties then measure their systems in the computational basis, and the collective output is the $N$-tuple $(a_1,\cdots,a_N)$ of measured values with probability distribution $p(a_1,\cdots,a_N)$.  In summary, every network correlation we consider can be generated using the following three-step procedure:
\begin{enumerate}
\item[(i)] For each hyperedge $e\in E$ in the network, a multipartite state $\rho_e$ is distributed across the network; 
\item[(ii)]  Each party $\msf{A}_i$ performs a local processing channel $\mc{E}_i$ on the systems under its control; 
\item[(iii)] Each party $\msf{A}_i$ measures its post-processed system in the computational basis and outputs the outcome $a_i$.
\end{enumerate}
A distribution $p(a_1,\cdots,a_N)$ built in this way will be called a \textit{quantum network correlation}.

%Our interest is an understanding the types of distributions $p(\mbf{b}_1,\cdots\mbf{b}_N)$ can be generated on a given network $\mc{G}$.  
Since this framework is designed to study quantum nonlocality, one may want to generalize the setup and grant each party $\msf{A}_i$ an input variable $x_i$ that controls the local processing performed.  In this case, the generated correlation would be a conditional distribution $p(a_1,\cdots,a_N|x_1,\cdots,x_N)$, which is typically the object of consideration in the study of Bell nonlocality.  However, in the network scenario, the distinction between correlations with inputs versus correlations without inputs is superficial.  This is due to the work of Fritz \cite{Fritz-2012a}, who showed how every correlation generated on some network with inputs can be equivalently generated on an enlarged network without inputs; essentially the local input variable $x_i$ of party $\msf{A}_i$ becomes an independent variable shared between $\msf{A}_i$ and some new party on the enlarged network.  Therefore, without loss of generality, we restrict to correlations with no inputs.

With this model in place, we can now identify different classes of quantum correlations by imposing different constraints on the distributed states $\rho_e$ and the types of local maps $\mc{E}_i$.  We begin with the set of local correlations defined with respect to a given network $\mc{G}=(V,E)$.  A network local correlation is any distribution that can be generated using a shared classical variable $\lambda_e$ for each hyperedge $e$.  If $p(\lambda_e)$ denotes the probability distribution over variable $\lambda_e$, then every local correlation has the form
\begin{equation}
p(a_1,\cdots,a_N)=\sum_{\vec{\lambda}}p(\vec{\lambda})\prod_{i=1}^Np(a_i|\cup_{e\in A_i}\lambda_e),\label{Eq:network-local}
\end{equation}
where the sum is over each sequence $\vec{\lambda}=(\lambda_e)_e$ of the $|E|$ independent random variables, $p(\vec{\lambda})=\prod_{e\in E}p(\lambda_e)$, and $p(a_i|\cup_{e\in A_i}\lambda_e)$ is a local classical channel of party $\msf{A}_i$.  If a quantum network correlation does not have this form, then it is called \textit{quantum network nonlocal}.  We claim that the set of local correlations is precisely those that can be generated via steps (i)--(iii) above under the constraint that each $\mc{E}_i$ satisfies the condition
\begin{equation}
\label{Eq:classical-simulatble}
    \Delta\circ\mc{E}_i = \Delta\circ\mc{E}_i\circ\Delta,
\end{equation}
where  $\Delta(\cdot)=\sum_{x}\op{x}{x}(\cdot)\op{x}{x}$ is the completely dephasing map in the computational basis.  This condition has previously been called \textit{resource non-activating} \cite{Liu-2017a}, but in this work we will say that any completely-positive trace-preserving (CPTP) map satisfying Eq. \eqref{Eq:classical-simulatble} is \textit{classically simulatable}.  Indeed, any correlation $p(a_1,\cdots,a_N)$ generated using classically simultable maps $\mc{E}_i$ can be simulated on the same network $\mc{G}$ using purely classical resources.  Since the parties measure their qubits in the computational basis after applying $\otimes_i\mc{E}_i$ in step (ii), Eq. \eqref{Eq:classical-simulatble} says that the parties could equivalently also dephase the shared state $\otimes_e\rho_e$ prior to step (ii), thereby converting each $\rho_e$ into a classically correlated state $\hat{\rho}_e=\sum_{\mbf{x}}p(\mbf{x})\op{\mbf{x}}{\mbf{x}}$, where $\ket{\mbf{x}}=\ket{x_1}\cdots\ket{x_{|e|}}$.  Conversely, any network correlation $p(a_1,\cdots,a_N)$ generated using classical shared randomness can also be produced using classically correlated states and CPTP maps of the form $\mc{E}_i(\cdot)=\sum_{a_i}\op{a_i}{a_i} \tr[\Pi_{a_i} (\cdot)]$, where
\[\Pi_{a_i}=\sum_{e\in A_i}\sum_{\lambda_e} p(a_i|\cup\lambda_e) \bigotimes_{e\in A_i}\op{\lambda_e}{\lambda_e}.\]
\begin{table}[t]
\centering
\begin{tabular}{|p{10em} || p{14em} |} 
\hline
{Operational constraint} & {Inaccessible type of network correlation } \\ 
 \hline\hline
\mbox{ $\mc{E}_i$ are classically } simultable & Network nonlocality  \\ \hline
$\mc{E}_i$ are local quantum wiring maps & Genuine network nonlocality \cite{Brunner_2022} \\\hline
\mbox{ $\mc{E}_i$ are classically } simultable for at least one party in a given subgraph & Full network nonlocality for the given subgraph \cite{Pozas-Kerstjens-2022a} \\ \hline
 \mbox{ $\mc{E}_i$ are Clifford gates} and 
 $\rho_e$ are pure \mbox{stabilizer states} & Network nonlocality\\
 \hline
\end{tabular}
\caption{Different classes of classical correlations emerging on a quantum network based on the operational constraints.}
\label{Tab-correlations}
\end{table}

Another type of constraint limits each $\mc{E}_i$ to be a local quantum wiring map for party $\msf{A}_i$ \cite{Gallego-2012a, deVicente-2014a, Brunner_2022}, which can be understood as a local operations and classical communication (LOCC) transformation performed on the separated systems that $\msf{A}_i$ receives from different sources.  Under this constraint, $\msf{A}_i$ is prohibited from performing entangling measurements across the different states it receives, such as in entanglement swapping.  Nevertheless, these operations are still strong enough to generate nonlocal correlations when seeded with at least one entangled state $\rho_e$ on the network.  All ``standard'' Bell tests - such as the celebrated violations of the Clauser-Horne-Shimony-Holt (CHSH) inequality \cite{Hensen-2015a, Shalm-2015a, Giustina-2015a} - can be implemented under the restriction of local quantum wiring maps.  In contrast, network correlations not producible by local quantum wiring maps can be defined as possessing \textit{genuine network nonlocality} \cite{Brunner_2022}, much in the same way that quantum states not producible by LOCC are defined as possessing entanglement \cite{Horodecki-2009a}.  This definition is motivated by the fact that to produce genuine quantum network nonlocality, the parties must be able to truly leverage the network structure and stitch together the different quantum sources through non-classical local interactions.

%Nonlocal correlations generated using local quantum wiring maps have been identified as \textit{non-genuine network nonlocal} since they involve stitching together standard (non-network) Bell tests by classical processing.   

To provide a complete account of the different notions of network nonlocality proposed in the literature, we consider one final type of correlation.  For a fixed subgraph of a given network, one could require that at least one of the constituent parties performs a classically simulatable operation $\mc{E}_i$.  Nonlocal correlations that cannot generated under this restriction are said to possess \textit{full network nonlocality} with respect to the particular subgraph \cite{Pozas-Kerstjens-2022a}.

We now enjoin the main result of this letter to the picture.  Let $p(a_1,\cdots,a_n)$ be any quantum network correlation built under the constraint that all $\rho_e$ are pure stabilizer states and the $\mc{E}_i$ are Clifford gates.  Then $p(a_1,\cdots,a_n)$ is a network local distribution.  The rest of this letter will be devoted to explaining this result and its proof in more detail; technical steps are postponed to the Supplemental Material.  Table \ref{Tab-correlations} summarizes the different types of network correlations in the context of the operational framework outlined above. 

\textit{$k$-network nonlocality.}  Before specializing to Clifford networks, let us first sharpen the notion of local correlations to reflect even better the structure of the network.  Let us say that a $k$-network is any hypergraph with disjoint edge set in which every quantum source is connected to no more than $k$ parties; i.e. $|\{i\;|\;e\cap A_i\not=\emptyset\}|\leq k$ for all $e\in E$.  An $N$-party distribution $p(a_1,\cdots,a_N)$ that can be built using classically simulatable operations on some $k$-network will be called \textit{$k$-network local}; otherwise it will be called \textit{quantum $k$-network nonlocal}. 

It turns out that every instance of quantum $k$-network nonlocality can be obtained from a quantum $2$-network nonlocal correlation through post-selection.  
\begin{proposition}
\label{Prop:k-to-2}
Suppose that $p(a_1,\cdots,a_N)$ is a quantum $k$-network nonlocal correlation for $N$ parties.  Then there exists an $(N+r)$-party correlation $\hat{p}(a_1,\cdots,a_N, c_{1},\cdots,c_r)$ that is quantum $2$-network nonlocal and satisfies
\begin{equation}
p(a_1,\cdots,a_N)=\hat{p}(a_1,\cdots,a_N| c_{1}=0,\cdots,c_r=0).
\end{equation}
In other words, conditioned on the new parties $\msf{C}_{1},\cdots,\msf{C}_r$ having the all-zero output, the other $N$ parties reproduce the original distribution $p$. 
\end{proposition}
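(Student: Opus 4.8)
\emph{Proof plan.} The plan is to realize each multipartite source on a bipartite-source network by teleporting its subsystems from a fresh auxiliary party, and to recover the original correlation by post-selecting on the ``no-correction'' teleportation outcomes. Concretely, take the $k$-network $\mc{G}=(V,E)$ on which $p$ is produced, with sources $\rho_e$ and local maps $\mc{E}_i$. For every hyperedge $e\in E$ I introduce one new party $\msf{C}_e$ and replace $\rho_e$ by the following gadget: a single-party source hands the entire state $\rho_e$ to $\msf{C}_e$, and for each original party $\msf{A}_i$ with $e\cap A_i\neq\emptyset$ a bipartite maximally entangled pair is distributed between $\msf{C}_e$ and $\msf{A}_i$. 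The party $\msf{C}_e$ then performs, on each subsystem of $\rho_e$ together with its half of the corresponding entangled pair, a Bell measurement implemented by a Clifford rotation into the computational basis; its output $c_e$ records the measured Bell index, with $c_e=0$ designating the $\ket{\Phi^+}$ (identity) outcome. All sources now touch at most two parties, so the enlarged network is a $2$-network and $\hat{p}$ is by construction a quantum $2$-network correlation. Conditioning on $c_e=0$ for every $e$ implements perfect, correction-free teleportation, so the original parties jointly hold $\bigotimes_e\rho_e$ exactly; applying their unchanged maps $\mc{E}_i$ and measuring then reproduces $p$, which is the claimed post-selection identity with $r=|E|$.

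It remains to show that $\hat{p}$ is quantum $2$-network nonlocal, which I would establish by contraposition: assuming $\hat{p}$ is $2$-network local, I would derive that $p$ is $k$-network local, contradicting the hypothesis. Starting from a classical local model for $\hat{p}$ on some $2$-network $\mc{G}'$, of the form $\sum_{\vec\lambda}\prod_{e'}p(\lambda_{e'})\prod_i p(a_i|\lambda_{A_i})\prod_\ell p(c_\ell|\lambda_{C_\ell})$, I would condition on $\vec c=\vec 0$. This reweights the product distribution over the hidden variables by the factor $\prod_\ell p(c_\ell=0|\lambda_{C_\ell})$; marginalizing out the auxiliary parties then collapses each auxiliary party together with its incident edge-variables into a single effective source shared among the original parties in its neighborhood, yielding a network-local model for $p$ whose response functions $p(a_i|\cdots)$ are unchanged.

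The main obstacle is precisely that the collapsed network need not be a $k$-network: a high-degree auxiliary hub in the adversarial model $\mc{G}'$ could, after conditioning, bind arbitrarily many of the original parties into one effective source, giving only the vacuous statement that $p$ is $N$-network local. Overcoming this requires using the specific structure of $\hat{p}$ rather than an arbitrary model: each auxiliary party was introduced to mediate exactly one original source of arity at most $k$, and the sources are mutually independent. The technical heart of the argument---deferred to the Supplemental Material---is to show that these independence constraints force any $2$-network-local model of $\hat{p}$ to collapse onto effective sources of arity at most $k$, so that conditioning on $\vec c=\vec 0$ cannot launder genuine $k$-partite correlations into a bipartite-source-plus-post-selection form. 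Establishing this no-go, which is the reason post-selection is essential, is the step I expect to be the crux of the proof.
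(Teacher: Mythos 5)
Your construction and the first half of your argument coincide with the paper's: the same teleportation gadget (one auxiliary party $\msf{C}_e$ per hyperedge, maximally entangled pairs between $\msf{C}_e$ and the systems of $e$, post-selection on the correction-free Bell outcome), the same post-selection identity with $r=|E|$, and the same Bayes-reweighting/collapse argument for the contrapositive. The genuine gap is that you stop exactly where the proof has to close: you declare the crux to be ruling out local models on an \emph{arbitrary} $2$-network, with possibly high-degree auxiliary hubs and hub-to-hub edges, and you defer that step to an unproven claim. The paper never faces this obstacle, because it takes the hypothetical classical model to live on the very graph $\mc{G}'$ produced by the construction: there, by construction, the edges incident to $\msf{C}_e$ touch at most $k$ of the original parties (since $\mc{G}$ is a $k$-network), and the edge sets $C_e$ of distinct auxiliary parties are disjoint. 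Conditioning on the outcomes $0_e$ and applying Bayes' rule then turns each bundle $\vec{\lambda}_e=(\lambda_{e'})_{e'\in C_e}$, distributed as $p(\vec{\lambda}_e|0_e)$, into an independent effective source shared among at most $k$ parties, yielding a $k$-network-local model for $p$ and the desired contradiction. Under that network-relative reading, your own collapse argument already completes the proof; no further ``no-go'' is needed.

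Moreover, the fix you propose for your self-identified crux --- that the independence structure of $\hat{p}$ forces \emph{any} $2$-network-local model to collapse onto effective sources of arity at most $k$ --- is not plausible as stated. On the complete $2$-network over the $N+r$ parties, every pair of parties (including pairs of auxiliary hubs) may share a variable; conditioning on all $c_\ell=0$ then chains all hubs into a single connected component, and the collapse produces one effective source touching all $N$ original parties, which gives only the vacuous conclusion that $p$ is $N$-network local. Nothing in the observable independencies of $\hat{p}$ obviously forbids such a wiring, so establishing your deferred claim is at least as hard as the strong nonlocality statement itself. To be fair, the paper's definition of ``$k$-network local'' is phrased existentially over networks, so your worry does expose a real imprecision in the paper: its proof only excludes classical models on $\mc{G}'$ itself. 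But as a proof of the stated proposition, your proposal is incomplete precisely at the step you flag as the heart of the argument, whereas the paper closes that step by fixing the network.
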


%A proof of this proposition is provided in the Supplemental Material at \textbf{[URL will be provided by publisher]}. 
To prove this proposition one replaces every $k$-element hyperedge $e$ with $k$ bipartite edges and then uses bipartite entanglement distributed on these edges to teleport the original state $\rho_e$.  We remark that our reduction to $k$-network nonlocality from $2$-network nonlocality via post-selection is specific to quantum networks, as it relies on quantum teleportation.  It is an interesting question whether such a reduction can be found for general non-signaling network correlations \cite{Gisin-2020a, Bierhorst-2021a} or whether this is a uniquely quantum feature. %\amanda{It is important to note that this post-selection is only used as a tool to show that any $k$-network nonlocality could be achieved from networks with only bipartite correlations under appropriate circumstances. However, post-selection is not used to obtain the correlations that our main result shows to be exclusively local, as delineated in the next section.}

\textit{Clifford Networks.}  Let us now turn to the notion of Clifford quantum networks, which can be understood as distributed Clifford circuits \cite{Gottesman-1998a}.  We begin by recalling the definitions of stabilizers and stabilizer operations \cite{Gottesman-1997a, Gottesman-1998a, Nielsen-2000a}.  Let $\mc{P}_n$ denote the $n$-qubit Pauli group of operators, $\mc{P}_n=\{\pm 1,\pm i\}\times\{I,X,Y,Z\}^{\otimes n}$.  Expressions like $X_2$ express Pauli-$X$ applied to qubit $2$ and the identity applied to all other qubits.  The $n$-qubit Clifford group consists of all unitaries that, up to an overall phase, leave $\mc{P}_n$ invariant under conjugation,
\begin{equation}
    \mc{C}_n=\{U\;|\; UgU^\dagger\in\mc{P}_n\quad \forall g \in \mc{P}_n\}/U(1). 
\end{equation}
The set of $n$-qubit stabilizer states is defined as
\begin{equation}
    \mc{S}_n=\{U\ket{0}\;|\; U\in\mc{C}_n\}.
\end{equation}
%along with all density matrices formed by taking convex combinations of elements from $\mc{S}_n$.  
Every stabilizer pure state $\ket{\varphi}\in\mc{S}_n$ can be uniquely identified as the $+1$ eigenstate of $n$ independent commuting elements from $\mc{P}_n$.  These operators generate a group, called a stabilizer group, that we denote by $\stab(\ket{\varphi})$.

Consider now a $k$-network $\mc{G}=(V,E)$ in which each vertex $v\in V$ represents a qubit system.  The hyperedges $E$ again form a disjoint partitioning of the vertex set and each $e\in E$ represents a multi-qubit state.  Suppose that party $\msf{A}_i$ receives a total of $n_i$ qubits from the various sources. Using the three-step framework introduced above, we consider correlations formed under the following restrictions:
\begin{enumerate}
    \item[(ic)] For each hyperedge $e\in E$, a pure stabilizer state $\ket{\varphi_e}$ is distributed;
    \item[(iic)] Each party $\msf{A}_i$ introduces $m_i$ ancilla qubits, each initialized in state $\ket{0}$, and performs a Clifford unitary gate $V_i$  on all the $n_i+m_i$ qubits held locally. 
\end{enumerate}
Like before, step (iii) involves a measurement of each qubit in the computational basis.  This generates a classical output for each party that is a sequence of bits $\mbf{b}_i=(b_1,b_2,\cdots,b_{n_i+m_i})\in\mbb{Z}_2^{n_i+m_i}$, one for each qubit used by party $\msf{A}_i$ in the protocol.  Letting $n=\sum_{i=1}^Nn_i$ and $m=\sum_{i=1}^N m_i$, the joint probability distribution for all measurements is then given by
\begin{align}
    p(\mbf{b}_1,\cdots,\mbf{b}_N)&=p(b_1,b_2,\cdots,b_{n+m})\notag\\
    &=\frac{1}{2^{n+m}}\bra{\omega}\bigotimes_{i=1}^{n+m}(\mbb{I}+(-1)^{b_i}Z_i)\ket{\omega}.
\end{align}
Every distribution $p(\mbf{b}_1,\cdots,\mbf{b}_N)$ having this form will be called a \textit{Clifford $k$-network correlation}.  As an example, a triangle Clifford network is depicted in Fig. \ref{Fig:network-pic}.  Its equivalent representation as a distributed quantum circuit is also shown.  

Observe that the Clifford constraint and the classical simulatable constraint are inequivalent; i.e., there are Clifford gates $U$ (such as Hadamard) whose corresponding CPTP map $\mc{U}(\cdot)=U(\cdot)U^\dagger$ fails to satisfy Eq. \eqref{Eq:classical-simulatble} and vice-versa.  Yet, on the level of network nonlocality, our result shows that Clifford operations fail to offer any non-classical advantage. 
\begin{theorem}
\label{Thm:Clifford}
Every Clifford $k$-network correlation is $k$-network local.
\end{theorem}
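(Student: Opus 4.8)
The plan is to recognize that the entire protocol is nothing more than a fixed local Pauli measurement performed on a tensor product of pure stabilizer states, so that its outcome statistics are uniform over an affine subspace whose ``random directions'' can be produced edge by edge. First I would absorb the local Cliffords into the measurement: writing $\ket{\omega}=V\ket{\Phi}$ with $V=\bigotimes_i V_i$ and $\ket{\Phi}=\bigotimes_e\ket{\varphi_e}$ (together with the ancillas), having $\msf{A}_i$ measure $Z_j$ after applying $V_i$ is the same as measuring the Pauli $V_i^\dagger Z_j V_i$ on $\ket{\Phi}$. The correlation is thus exactly the computational-basis distribution of the stabilizer state $\ket{\omega}$. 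The structural observation I would lean on is that $\stab(\ket{\omega})=V\,\stab(\ket{\Phi})\,V^\dagger$ admits a generating set whose elements each have party-support inside a single hyperedge: a generating set of $\stab(\ket{\varphi_e})$ is supported on the qubits of $e$, hence on the parties $P_e:=\{i\mid e\in A_i\}$ with $|P_e|\le k$, and since each $V_i$ acts only within $\msf{A}_i$'s qubits, conjugation by $V$ preserves this party-support (the ancilla generators $Z_q$ stay supported on a single party).

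Next I would invoke the standard stabilizer fact that the computational-basis distribution of $\ket{\omega}$ is uniform over an affine subspace $\mc{A}=\mbf{b}_0+W$, where $W=Z^\perp$ and $Z\subseteq\mbb{Z}_2^{n+m}$ is the space of $Z$-strings occurring in $\stab(\ket{\omega})$; the offset $\mbf{b}_0$ merely records the stabilizer signs and is a fixed bitstring applied locally, coordinate by coordinate, by each qubit's owner. The point is that a uniform distribution on $\mc{A}$ is $k$-network local as soon as $W$ is spanned by vectors each supported inside some $P_e$: I would assign one independent uniform bit $\mu_r$ to the source $e(r)$ for every such spanning vector $\mbf{x}_r$ (collecting these into $\lambda_e$), and output $\mbf{b}=\mbf{b}_0+\sum_r \mu_r\mbf{x}_r$. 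Then each party's bits depend only on the shared randomness of its incident edges, exactly the form of Eq.~\eqref{Eq:network-local}, and since every source touches at most $k$ parties this realizes the correlation on a $k$-network.

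The crux is therefore to show that the $X$-parts of the edge-local generators already furnish such a spanning set for $W$. Writing each generator as $\pm X^{\mbf{x}_r}Z^{\mbf{z}_r}$, the vectors $\mbf{x}_r$ inherit party-support inside $P_{e(r)}$, so everything reduces to proving $\mathrm{span}\{\mbf{x}_r\}=W$. Commutativity of the abelian group $\stab(\ket{\omega})$ forces every $\mbf{x}_r$ to be orthogonal to each $Z$-string in $Z$, giving $\mathrm{span}\{\mbf{x}_r\}\subseteq Z^\perp=W$; for the reverse inclusion I would run a dimension count, identifying $\dim Z$ with the dimension of the space of generator-products having trivial $X$-part, namely $(n+m)-\dim\,\mathrm{span}\{\mbf{x}_r\}$, which matches $(n+m)-\dim W$.

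I expect this dimension-counting step --- in essence, the observation that isotropy of the stabilizer group pins the number of deterministic $Z$-constraints to the rank of the $X$-parts --- to be the main obstacle, and the place where the stabilizer structure is indispensable. It is exactly this feature that fails for a generic collection of edge-local states and local measurements, where no $W$-compatible (edge-local) basis need exist and genuine network nonlocality can arise; its presence here is what makes Clifford resources collapse to classical ones.
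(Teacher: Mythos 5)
Your proof is correct, and it takes a genuinely different route from the paper's. The paper proceeds by three successive reductions --- teleportation to replace every hyperedge by bipartite edges (its Proposition~\ref{Prop:k-to-2}, specialized to Cliffords), the Fattal \textit{et al.} normal form to reduce every bipartite stabilizer state to copies of $\ket{\Phi^+}$, and a separate lemma removing local ancillas --- and then constructs an explicit local model on the resulting Bell-pair graph: each edge carries three uniform bits $(\ol{X}_e,\ol{Y}_e,\ol{Z}_e)$, each output $\ol{B}_k$ is a parity of the bits overlapped by the effective measurement operator $g_k$ plus a sign-correction term $\mc{I}(\{g_k,g_0\})$ built from a special Pauli $g_0$, and correctness is checked by matching every parity probability $Pr\{\ol{B}_S=0\}$ against $Pr\{g_S=+1\}$ and invoking the fact that parity checks determine the pmf (its Proposition~\ref{Prop:joint-pmf-correlation}). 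You bypass all of this: you stay on the original hypergraph with ancillas in place, observe that conjugation by $\bigotimes_i V_i$ preserves the \emph{party}-support (though not the qubit-support) of the edge-local and ancilla-local stabilizer generators, use the standard fact that the computational-basis distribution of a stabilizer state is uniform on an affine subspace $\mbf{b}_0+Z^\perp$, and prove via commutativity plus rank--nullity (the $X$-part map on the stabilizer group is linear with kernel exactly the $Z$-type subgroup) that the $X$-parts of the edge-local generators span $Z^\perp$; one uniform bit per generator, routed to that generator's source, then reproduces the correlation. Your argument is more self-contained and uniform --- no post-selection reduction, no Bell-pair normal form, no ancilla lemma, no parity-uniqueness proposition --- and the sign bookkeeping the paper handles with $g_0$ and the indicator $\mc{I}$ is absorbed entirely into the choice of offset $\mbf{b}_0$, which can be any point in the support. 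What the paper's longer route buys is content of independent interest: the post-selection reduction from $k$-networks to $2$-networks, and a fixed three-bits-per-edge hidden variable that connects directly to Pusey's model. One small point to make explicit in a write-up: the bits attached to ancilla-derived generators have single-party support, so they constitute local rather than shared randomness, which the definition of a network-local correlation in Eq.~\eqref{Eq:network-local} permits since the local response functions are stochastic channels.
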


\begin{figure}[t]
\centering
\includegraphics[scale=0.69]{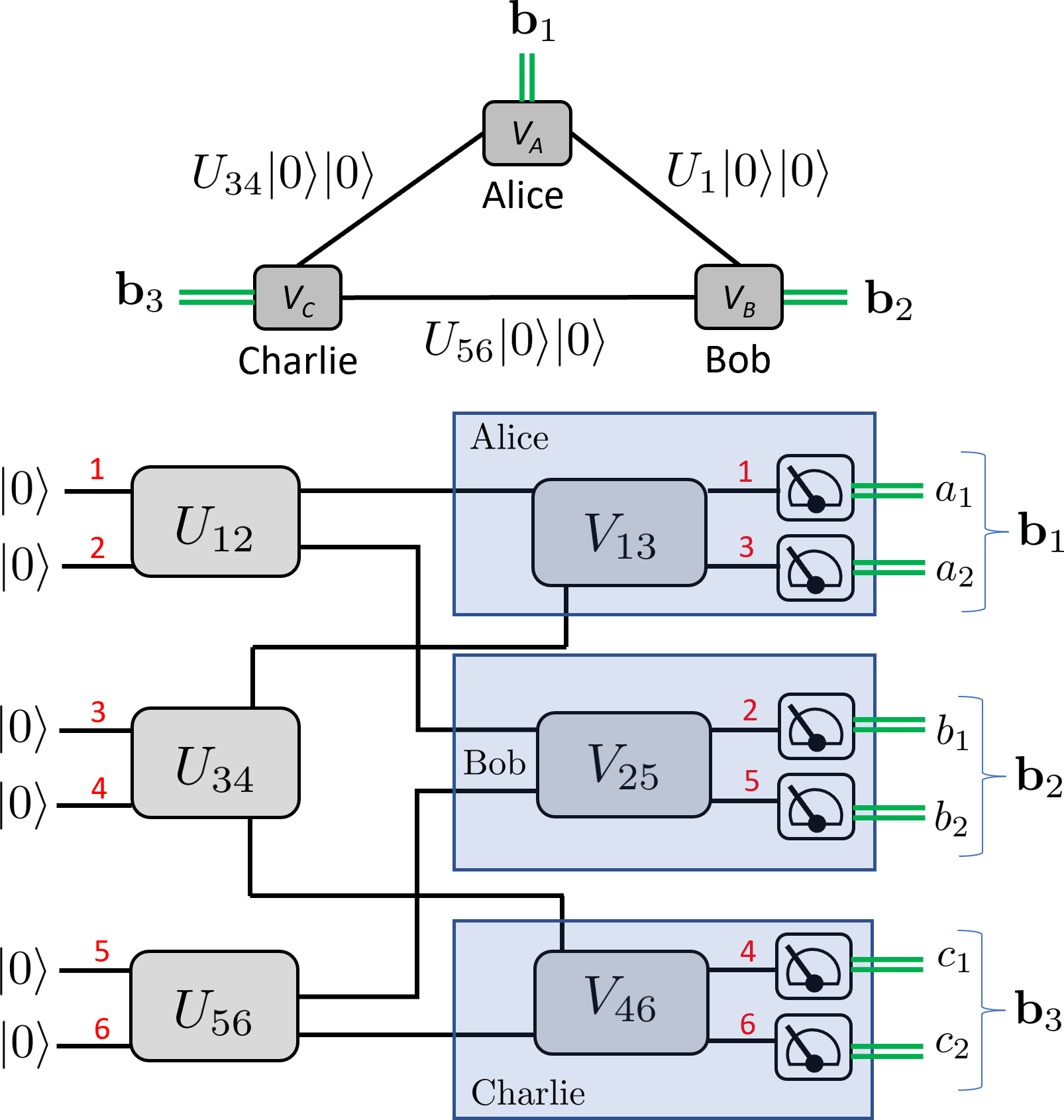}
\caption{The triangle network represented in the standard circuit model, with no ancilla qubits.  The distribution over $(a_1,a_2,b_1,b_2,c_1,c_2)$ is a Clifford $2$-network distribution if each of the $U_{e}$ and $V_{i}$ are Clifford gates.}
\label{Fig:network-pic}
\end{figure}

The proof of this theorem begins by performing three simplifications.  %First, we observe that it suffices to consider the distribution of pure stabilizer states $\rho_e=\op{\varphi_e}{\varphi_e}$ along each hyperedge in a Clifford $k$-network.  Indeed, the $k$-party randomness in each mixed state $\rho_e$ can be absorbed into the classical randomness $\lambda_e$ simulating the distribution (see the Appendix for more details); hence if every Clifford $k$-network correlation is $k$-network local for pure states, then it will also be for mixed states.  
First, in Proposition \ref{Prop:k-to-2} we established that every quantum $k$-network nonlocal distribution can be obtained from a quantum $2$-network nonlocal distribution via post-selection.  Moreover, since teleportation is carried out using Pauli gates, the proof of Proposition \ref{Prop:k-to-2} can be specialized to Clifford networks: if a quantum $k$-network nonlocal distribution is generated on a Clifford $k$-network, then there exists a quantum $2$-network nonlocal distribution generated on a Clifford $2$-network.  Hence, to prove Theorem \ref{Thm:Clifford}, it suffices to show that measuring Clifford $2$-network states always leads to $2$-network local distributions.  In other words, we can restrict $\mc{G}$ to just being a graph so that each $\ket{\varphi_e}$ is a bipartite stabilizer state.  Second, recall the fact that every bipartite stabilizer state can be transformed into copies of $\ket{\Phi^+}=\frac{1}{\sqrt{2}}(\ket{00}+\ket{11})$ and computational basis states $\ket{0}$ using local Cliffords \cite{Fattal-2004a}.  Third, as proved in the Supplemental Material, the use of local ancilla provides no advantage for the purpose of generating $2$-network nonlocality. Then it suffices to consider graphs $\mc{G}$ in which each node is connected to exactly one other node, the two nodes are held by different parties, and their connecting edge represents a maximally entangled state $\ket{\Phi^+}$ shared between them.  %We denote the corresponding graph state as $\ket{\Phi^+_{\mc{G}}}:=\bigotimes_{e\in E}\ket{\Phi^+_e}$.

The next part of the proof involves constructing a local model for any graph having this structure and any choice of local stabilizer measurement.  Our local model involves replacing each edge $e$ on the graph with a random variable $\lambda_e=(\ol{X}_e,\ol{Y}_e,\ol{Z}_e)$, which is a trio of uniform independent random bits.  Based on the local measurement $\msf{A}_i$ wants to perform and the values $\{\lambda_e\}_{e\in A_i}$, outputs $\mbf{b}_i$ can be generated that jointly have the correct distribution $p(\mbf{b}_1,\cdots,\mbf{b}_N)$.  Details of the model and a proof of its correctness are presented in the Supplemental Material.  During the completion of this manuscript we became aware of a similar model independently derived by Matthew Pusey in Ref. \cite{Pusey2010}, although it does not directly account for network constraints and the emergence of network nonlocality \footnote{One feature of Pusey's model in Ref. \cite{Pusey2010} that can also be applied here is a slight simplification of the shared classical randomness.  Instead of needing a trio of random bits $\lambda_e=(\ol{X}_e,\ol{Y}_e,\ol{Z}_e)$ for each edge $e$, it suffices to have two bits $(\ol{X}_e,\ol{Z}_e)$ and then compute the third random bit as their product, $\ol{Y}_e=\ol{X}_e\ol{Z}_e$.}.

\textit{Discussion.}  In this letter, we have presented a unifying framework for the general study of quantum network nonlocality.  We believe this framework can help clarify what types of quantum resources are needed to generate different forms of nonlocality.  It also helps draw a trifold connection between nonlocality, quantum resource theories, and quantum computation, as shown in Fig. \ref{Fig:network-pic}.  When any protocol is fully decomposed as a distributed quantum circuit, we found that some non-Clifford operation is needed to generate nonlocal correlations. 

The importance of non-Clifford operations is further corroborated by the purported examples of genuine network nonlocality or full network nonlocality found in the literature \cite{Renou-2019b, Tavakoli-2021a, Renou-2022a, Renou-2022b, Pozas-Kerstjens-2022a, Pozas-Kerstjens-2022b}. In each of these examples some non-Clifford operation is needed to realize the nonlocality. Even more conspicuously, in the triangle network considered by Renou \textit{et al.} in Ref. \cite{Renou-2019b}, nonlocality fails to emerge exactly when the parameters of their model coincided with a Clifford network. Our work formalizes the reason why this happens, and can thus be interpreted as a guide for what kind of resources are needed to generate network nonlocality.

One may feel that Theorem \ref{Thm:Clifford} is not that surprising given the Gottesman-Knill Theorem for simulating Clifford circuits.  However, let us highlight two reasons why the former stands independently of the latter, despite them sharing a kindred spirit.   First, the Gottesman-Knill Theorem provides a classical algorithm to correctly reproduce the outcome statistics when locally measuring a stabilizer state.  However, in this algorithm, one must update the generators of the stabilizer after simulating the measurement of each qubit.  This requires the communication of global information, which is forbidden in the network nonlocality model.  Therefore, a completely new classical model is needed for a \textit{distributed} simulation, which is what Theorem \ref{Thm:Clifford} provides. %\amanda{\textbf{(FOOTNOTE)}If one allows for some communication, then the set of quantum correlations that can be simulated of course becomes large.  For example, Ref. \cite{Tessier2005} modifies the Gottesman-Knill algorithm by allowing for classical communication between qubits in a quantum circuit.  This allows for the reproduction of certain nonlocal correlations generated by measuring GHZ states. Note that $n$-party GHZ states cannot be generated in a Clifford $k$-network}.

Second, Theorem \ref{Thm:Clifford} breaks down when relaxing certain operational restrictions whereas the Gottesman-Knill Theorem does not.  Specifically, suppose we relax condition (ic) by allowing for \textit{mixed} stabilizer states; i.e. convex combinations of stabilizer states whose purification is no longer a stabilizer state.  Remarkably, now it becomes possible to violate Theorem \ref{Thm:Clifford} using so-called ``disguised'' Bell nonlocality \cite{Tavakoli-2022a}.  The construction is presented in the Supplemental Material and involves extending Mermin's magic square game \cite{Mermin-1990a} to the network setting using a slight modification of Fritz's construction.  This result highlights the strong non-convexity that emerges in network nonlocality: local models can be constructed for all pure stabilizer states, but this fails to be possible when taking their mixture. It is tempting to think that the classical randomness in the mixed state $\rho_e$ could just be absorbed into the classical randomness distributed on edge $e$.  However, this does not always work since the local classical strategy for some party $\msf{A}_i$ in simulating a Clifford network distribution might depend on which state $\ket{\varphi_e}$ is seeded on edge $e$, even if $\msf{A}_i$ is not connected on $e$.  

Similarly, one could consider relaxing condition (iic) by allowing for convex combinations of Clifford gates; i.e. $\mc{E}=\sum_kp_kU_k(\cdot)U_k^\dagger$ with each $U_k$ being a Clifford.  Then by a similar construction to the one used for mixed stabilizer states, it is is possible to generate nonlocality on the network.  The problem is that not every Clifford channel like $\mc{E}$ admits a unitary diltation that itself is Clifford, and thus Theorem \ref{Thm:Clifford} does not apply.  In contrast, the Gottesman-Knill Theorem still provides an efficient classical simulation algorithm for quantum circuits using mixed stabilizer states and random Clifford channels \cite{Veitch-2014a}.  Ultimately, we hope that these findings and the framework described here help pinpoint the precise conditions necessary for generating nonlocal correlations on a quantum network.

\textit{Acknowledgments --} We thank Marc-Olivier Renou for discussing with us the subtleties of disguised network nonlocality.  We are also grateful to Mark Howard, Jonathan Barrett, Nicolas Brunner, and Matthew Pusey for bringing several precursor works to our attention and discussing other connections between Clifford circuits and local classical models.    This material is based upon work supported by the U.S. Department of Energy Office of Science National Quantum Information Science Research Centers.

\bibliography{networks}

\newpage
\onecolumngrid
\section{Supplementary Material}

Here we provide more detailed proofs for some of the claims made in the main text.

\section{k-network nonlocality through post-selection}

 We begin with Proposition 1 in the main text, restated here for convenience.
\begin{prop}
Suppose that $p(a_1,\cdots,a_N)$ is a quantum $k$-network nonlocal distribution for $N$ parties.  Then there exists an $(N+r)$-party distribution $\hat{p}(a_1,\cdots,a_N, c_{1},\cdots,c_r)$ that is quantum $2$-network nonlocal and satisfies
\begin{equation}
p(a_1,\cdots,a_N)=\hat{p}(a_1,\cdots,a_N| c_{1}=0,\cdots,c_r=0).
\end{equation}
In other words, conditioned on the new parties $\msf{C}_{1},\cdots,\msf{C}_r$ having the all-zero output, the other $N$ parties reproduce the original distribution $p$.
\end{prop}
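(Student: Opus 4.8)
The plan is to build $\hat{p}$ by quantum teleportation, using the post-selection on the $\msf{C}$-parties' outputs to discard the classical correction data that teleportation would otherwise require. For each hyperedge $e\in E$ that is connected to more than two parties, introduce a new party $\msf{C}_e$ who locally prepares the entire state $\rho_e$ on its $|e|$ qubits. For each original party $\msf{A}_i$ drawing systems from $e$, place a bipartite maximally entangled source $\ket{\Phi^+}$ shared between $\msf{C}_e$ and $\msf{A}_i$ (one Bell pair per qubit of $\rho_e$ destined for $\msf{A}_i$). Party $\msf{C}_e$ performs a Bell measurement on each of its $\rho_e$ qubits together with its half of the corresponding Bell pair — realized in the framework as a Clifford basis change followed by a computational-basis measurement — and its concatenated two-bit outcomes form its output $c_e$. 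After relabeling, there are $r$ such parties $\msf{C}_1,\cdots,\msf{C}_r$ (one per reduced hyperedge). Every source in the new network now touches at most two parties: each Bell-pair source connects $\msf{C}_e$ to a single $\msf{A}_i$, and the preparation at $\msf{C}_e$ is a single-party source. Hence the enlarged network is a $2$-network.

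First I would verify the post-selection identity. Teleportation of a qubit succeeds with the trivial (identity) Pauli correction precisely when the Bell-measurement outcome is the all-zero string, in which case the receiving party holds the teleported subsystem exactly, with no correction needed. Conditioning on $c_1=\cdots=c_r=0$ therefore delivers to the original parties exactly the joint state $\bigotimes_e\rho_e$ arranged according to the original network structure; their unchanged local maps $\mc{E}_i$ and computational-basis measurements then reproduce $p(a_1,\cdots,a_N)$. This step is mechanical once the teleportation identity is recorded.

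The crux is showing that $\hat{p}$ is genuinely quantum $2$-network nonlocal, and I would argue by contraposition. Suppose $\hat{p}$ admitted a $2$-network local model of the form in Eq.~\eqref{Eq:network-local}, with an independent hidden variable $\nu_e$ for each single-party preparation source and $\mu_{e,i}$ for each Bell-pair source between $\msf{C}_e$ and $\msf{A}_i$. In any such model, $\msf{C}_e$'s output $c_e$ is a classical function only of the variables internal to its gadget, namely $(\nu_e,\{\mu_{e,i}\}_i)$, while $\msf{A}_i$'s output $a_i$ depends only on $\{\mu_{e,i}\}_{e\ni i}$. Since gadgets for distinct hyperedges carry disjoint, independent hidden variables, the event $\{c_e=0\}$ is statistically independent across $e$ and conditioning factorizes hyperedge-by-hyperedge. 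Bundling $\lambda_e:=(\nu_e,\{\mu_{e,i}\}_i)$ and letting $\tilde{p}(\lambda_e)$ be its distribution conditioned on $c_e=0$, the post-selected distribution becomes $p(a_1,\cdots,a_N)=\sum_{\{\lambda_e\}}\prod_e\tilde{p}(\lambda_e)\prod_i p(a_i\mid\{\lambda_e\}_{e\ni i})$, which is exactly a $k$-network local model on the original graph $\mc{G}$. This contradicts the hypothesis that $p$ is quantum $k$-network nonlocal, so $\hat{p}$ cannot be $2$-network local.

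The main obstacle is precisely this final factorization: one must confirm that post-selecting the auxiliary outputs introduces no hidden correlations between the reduced variables $\lambda_e$. This is where the gadget-locality of each $c_e$ and the source-independence built into the $2$-network local model do the work, so I would state explicitly that each $\msf{C}_e$ acts only on systems internal to its own gadget, making $\{c_e=0\}$ an independent reweighting of each $\lambda_e$ separately. Everything else — the construction and the teleportation identity — is routine.
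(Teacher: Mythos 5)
Your proof is correct and follows essentially the same route as the paper's: replace each hyperedge by a gadget of Bell pairs centered at a new party $\msf{C}_e$ who teleports $\rho_e$ outward, post-select on the identity-correction Bell outcomes, and refute $2$-network locality of $\hat{p}$ by observing that conditioning on $c_1=\cdots=c_r=0$ factorizes gadget-by-gadget (the paper does this via Bayes' rule), so the bundled, reweighted gadget variables yield a $k$-network local model for $p$, a contradiction. The differences are cosmetic, e.g., you gadgetize only hyperedges touching more than two parties and treat $\rho_e$ as a single-party source at $\msf{C}_e$, while the paper replaces every hyperedge and has $\msf{C}_e$ hold $\rho_e$ directly.
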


\begin{proof}
 Let $p(\mbf{b}_1,\cdots,\mbf{b}_N)$ be a $k$-network nonlocal distribution built on some quantum $k$-network $\mc{G}=(V,E)$.  Set $r=|E|$. For each $e\in E$ we introduce a new party $\mc{C}_e$ and a set of $|e|$ new vertices controlled locally by $\mc{C}_e$.  Each of these vertices is connected to a single node in the original hyperedge $e$.  In other words, we replace each hyperedge of size $|e|$ in $\mc{G}$ by $|e|$ disjoint bipartite edges, thereby generating a new $2$-graph $\mc{G}'=(V', E')$.  The bipartite entanglement distributed on $\mc{G}'$ is sufficiently large so that the original $k$-partite state $\rho_e$ from $\msf{C}_e$ to all the other vertices via teleportation.  In this new network, the parties $\msf{A}_1,\cdots,\msf{A}_N$ perform the same local measurement as they did in $\mc{G}$, while parties $\msf{C}_1,\cdots,\msf{C}_{r}$ perform the teleportation measurements obtaining outcomes $(\mbf{c}_1,\cdots,\mbf{c}_r)$.  Since each edge in $E'$ connects exactly two parties, this procedure generates a quantum $2$-network 
distribution $\hat{p}(\mbf{b}_1,\cdots,\mbf{b}_N,\mbf{c}_1,\cdots,\mbf{c}_{r})$.  In the teleportation measurement, let $0_e$ denote the outcome in which party $\msf{C}_e$ projects onto the maximally entangled state.  When all the $\msf{C}_e$ obtain this outcome, then the post-measurement state for parties $\msf{A}_1,
\cdots,\msf{A}_N$ is $\otimes_{e\in E}\rho_e$.  Hence, their conditional distribution is exactly the original,
\begin{equation}
\hat{p}(\mbf{b}_1,\cdots,\mbf{b}_N|0_1,\cdots,0_{r})=p(\mbf{b}_1,\cdots,\mbf{b}_N).
\end{equation}   

Now suppose that $\hat{p}(\mbf{b}_1,\cdots,\mbf{b}_N,\mbf{c}_1,\cdots,\mbf{c}_{r})$ is $2$-network local on graph $\mc{G}'=(V',E')$.  Let $C_e$ denote the collection of edges in $E'$ that are connected to party $\msf{C}_e$, and note that no more than $k$ distinct parties are connected to $\msf{C}_e$ since $\mc{G}$ is a $k$-network.  Then $\hat{p}(\mbf{b}_1,\cdots,\mbf{b}_N,0_1,\cdots,0_{r})$ is equal to
\begin{equation}
\sum_{\vec{\lambda}}\hat{p}(\vec{\lambda})\prod_{i=1}^N\hat{p}(\mbf{b}_i|\!\bigcup_{e'\in A_i}\!\lambda_{e'})\prod_{e=1}^{r}\hat{p}(0_{e}|\bigcup_{e'\in C_e}\lambda_{e'}),
\end{equation}
where $\hat{p}(\vec{\lambda})=\prod_{e'\in E'}p(\lambda_{e'})=\prod_{e=1}^r\prod_{e'\in C_e}p(\lambda_{e'})$. We can write the conditional probability $\hat{p}(\mbf{b}_1,\cdots,\mbf{b}_N|0_1,\cdots,0_{r})$ as
\begin{equation}
\sum_{\vec{\lambda}}\prod_{i=1}^N\hat{p}(\mbf{b}_i|\!\bigcup_{e'\in A_i}\!\lambda_{e'})\prod_{e=1}^{r}\frac{\hat{p}(0_{e}|\bigcup_{e'\in C_e}\lambda_{e'})}{\hat{p}(0_e)}\prod_{e'\in C_e}p(\lambda_{e'}),
\end{equation}
where $\hat{p}(0_1,\cdots,0_{r})=\hat{p}(0_1)\cdots \hat{p}(0_{r})$ due to the independence. By Bayes' rule we can invert the conditional probabilities of $0_{e}$,
 \begin{equation}
\frac{\hat{p}(0_{e}|\bigcup_{e'\in C_e}\lambda_{e'})}{\hat{p}(0_e)}\prod_{e'\in C_e}\hat{p}(\lambda_{e'})=\hat{p}(\textstyle\bigcup_{e'\in C_{e}}\lambda_{e'}|0_{e}).\notag
 \end{equation}
We have therefore constructed a classical model for generating $p(\mbf{b}_1,\cdots,\mbf{b}_N)$ using shared random variables of the form $\vec{\lambda}_e:=(\lambda_{e'})_{e'\in C_{e}}$ with distribution $p(\vec{\lambda}_e|0_{e})$.  Since the edges in $C_e$ are connected to no more than $k$ distinct parties in total, the variable $\vec{\lambda}_e$ is likewise shared among no more than $k$ parties.  This contradicts the assumption that $p$ is $k$-network nonlocal, and so it must be the case that $\hat{p}$ is $2$-network nonlocal.  
\end{proof} 

\medskip

\section{Clifford gates without ancilla}

The next lemma shows that to generate network nonlocality, it suffices to consider local Clifford gates without any ancilla.

\begin{lemma}
Every Clifford $k$-network correlation $p(\mbf{b}_1,\cdots,\mbf{b}_N)$ generated using local ancilla can also be generated using local classical post-processing and no local ancilla systems. 
\end{lemma}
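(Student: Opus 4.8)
The plan is to absorb each party's ancilla into an effective POVM on its received qubits and then show that this POVM is merely a local classical post-processing of a genuine stabilizer (computational-basis) measurement performed after a Clifford $W_i$ with \emph{no} ancilla. Concretely, for party $\msf{A}_i$ the composite operation ``adjoin $\ket{0}^{\otimes m_i}$, apply $V_i$, measure all $n_i+m_i$ qubits'' implements the POVM $M_{\mbf b_i}=\bra{0}^{\otimes m_i}V_i^\dagger\op{\mbf b_i}{\mbf b_i}V_i\ket{0}^{\otimes m_i}$ acting on the $n_i$ received qubits. Since the shared state $\bigotimes_e\op{\varphi_e}{\varphi_e}$ carries no ancilla, writing $p(\mbf b_1,\dots,\mbf b_N)=\tr[\bigotimes_i M_{\mbf b_i}\bigotimes_e\op{\varphi_e}{\varphi_e}]$ already eliminates the ancilla from the state; what remains is to express each $M_{\mbf b_i}$ in a form corresponding to an ancilla-free Clifford gate followed by computational measurement and local classical processing.

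First I would analyze a single party. Expanding $\op{\mbf b_i}{\mbf b_i}=2^{-(n_i+m_i)}\prod_j(\1+(-1)^{b_j}Z_j)$ and conjugating by $V_i$ turns the measured observables into commuting Hermitian Paulis $P_j=V_i^\dagger Z_jV_i$ on the $n_i+m_i$ qubits, which generate an abelian group $G$. Taking the ancilla matrix element $\bra0\,\cdot\,\ket0$ annihilates every $g=g^S\otimes g^A\in G$ whose ancilla part $g^A$ is not diagonal, so $M_{\mbf b_i}$ is supported on the subgroup $G_0$ of elements with $g^A\in\{I,Z\}^{\otimes m_i}$. Their system restrictions $\{g^S:g\in G_0\}$ are again mutually commuting Hermitian Paulis (commutation of the diagonal parts $g^A,h^A$ forces commutation of $g^S,h^S$), generating an abelian group $H$ on the $n_i$ received qubits; hence every $M_{\mbf b_i}$ lies in the commutative algebra $\mc A_H$ spanned by the joint eigenprojectors of $H$.

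The key structural conclusion is then that measuring $H$ is a bona fide stabilizer measurement: choosing a Clifford $W_i$ that rotates a generating set of $H$ into $Z$-type operators, the computational-basis outcomes $\mbf s$ of $W_i(\cdot)W_i^\dagger$ determine, via linear functions and discarding the leftover ``junk'' bits, all eigenvalues probed by $\mc A_H$. Because $\{M_{\mbf b_i}\}$ is a POVM whose elements lie in $\mc A_H$, it decomposes as $M_{\mbf b_i}=\sum_{\mbf s}c_i(\mbf b_i|\mbf s)\,W_i^\dagger\op{\mbf s}{\mbf s}W_i$ with coefficients $c_i(\mbf b_i|\mbf s)=\tr[M_{\mbf b_i}W_i^\dagger\op{\mbf s}{\mbf s}W_i]/\tr[\op{\mbf s}{\mbf s}]$ that are automatically nonnegative and normalized over $\mbf b_i$, i.e.\ a legitimate local classical channel. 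Substituting this into the tensor expression for $p$ and using independence of the parties lets the channels factor out, yielding $p(\mbf b_1,\dots,\mbf b_N)=\sum_{\mbf s_1,\dots,\mbf s_N}\big(\prod_i c_i(\mbf b_i|\mbf s_i)\big)\,q(\mbf s_1,\dots,\mbf s_N)$, where $q$ is precisely the ancilla-free Clifford $k$-network distribution obtained from the gates $W_i$ and computational measurements.

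I expect the main obstacle to be the stabilizer bookkeeping in the single-party step: tracking the signs $\epsilon_g$ picked up by $\bra0 g\ket0$, verifying that $G_0$ really is a subgroup and that its system restrictions form a commuting Pauli group of some rank $r\le n_i$, and confirming that a rank-$r$ commuting Pauli measurement is realized by a Clifford plus computational measurement with the $n_i-r$ surplus outcome bits harmlessly absorbed into $c_i$. None of this is deep, but it is where the argument must be executed with care; the rest is the clean algebra of factoring local classical channels out of a tensor-product expectation.
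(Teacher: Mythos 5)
Your proof is correct, but it takes a genuinely different route from the paper's. The paper argues sequentially and state-centrically: fixing a party $\msf{A}_i$ and conditioning on all the other parties' outcomes $\mbf{b}_{\ol{i}}$, it observes that the conditional state on $\msf{A}_i$'s $n_i$ qubits is a stabilizer state whose generators $g_1,\dots,g_{n_i}$ are fixed Paulis up to signs $(-1)^{d_l}$ determined by $\mbf{b}_{\ol{i}}$; hence the ancilla-assisted statistics satisfy $p(\mbf{b}_i|\mbf{b}_{\ol{i}})=p(\mbf{b}_i|\mbf{d})$, so $\msf{A}_i$ can instead measure the $g_l$ directly (an ancilla-free commuting-Pauli measurement, i.e.\ a Clifford followed by a computational-basis measurement), learn $\mbf{d}$, and classically sample $\mbf{b}_i$ from $p(\cdot|\mbf{d})$. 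You never condition on anything: you absorb the ancilla into the effective POVM $M_{\mbf{b}_i}$, show each element lies in the commutative algebra of the commuting Pauli group $H$ arising from the subgroup $G_0$ with diagonal ancilla parts, and read off the classical channel $c_i(\mbf{b}_i|\mbf{s})$ whose validity (nonnegativity and normalization) comes for free from the POVM structure. Your version is more modular --- it is a purely local statement about each party's measurement, independent of the shared state and the other parties, and it sidesteps the measurement-ordering and stabilizer-update bookkeeping the paper needs; the paper's version, in exchange, makes explicit that the only information a party requires is the sign vector $\mbf{d}$ of its conditional stabilizer, a framing that is conceptually reused in the classical model that completes Theorem 1. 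The caveats you flag do all go through: $G_0$ is a subgroup because diagonal ancilla parts are closed under multiplication; the commutation transfer from ancilla parts to system parts is exactly as you state; and even if signed elements (or $-\mbb{I}$) appear in $H$, a Clifford $W_i$ still maps $\mathrm{span}(H)$ into diagonal matrices, which is all that the decomposition $M_{\mbf{b}_i}=\sum_{\mbf{s}}c_i(\mbf{b}_i|\mbf{s})\,W_i^\dagger\op{\mbf{s}}{\mbf{s}}W_i$ requires.
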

\begin{proof}

Let $\ket{\Psi}^{A_1\cdots A_N}$ be an $N$-partite stabilizer state with stabilizer $\stab(\ket{\Psi})$.  For an arbitrary party $\msf{A}_i$, consider the effect when all other parties perform local Clifford gates with ancilla and then measure their qubits.  When all the other parties append their local ancilla systems, the overall state $\ket{\Psi}\ket{0}_{1}\cdots\ket{0}_{m-m_i}$ has stabilizer $\stab(\ket{\Psi})\cup \{Z_{1},\cdots,Z_{m-m_i}\}$.  The application of local Cliffords will then generally mix these two sets of generators, and the new state $\ket{\Psi'}$ of $t=(n+m-m_i)$ qubits will have stabilizer $\stab(\ket{\Psi'})=\langle g_1,g_2,\cdots g_t\rangle$.  Each qubit is then measured in the $Z$-basis.  

Recalling how Pauli measurements work in the stabilizer formalism \cite{Nielsen-2000a},   if $Z_k\in \stab(\ket{\Psi'})$, then its outcome is determined and we can use $Z_k$ as one of the stabilizer generators.  On the other hand, if $Z_k\not\in\stab(\ket{\Psi'})$, then we can find a set of generators for $\stab(\ket{\Psi'})$ such that $Z_k$ anti-commutes with one and only of of them.  The measurement outcome on qubit $k$ is equivalent to an unbiased coin flip, and we replace the anti-commuting element with $(-1)^{b_k}Z_k$, where $(-1)^{b_k}$ is the measurement outcome of $Z_k$ and $b_k\in\{0,1\}$.  The post-measurement state thus has an updated stabilizer, and the measurement outcome on any other qubit will either be another random coin flip or it will depend on the value $b_{k}$.

Suppose that the measurement of parties $\msf{A}_1\cdots\msf{A}_{i-1}\msf{A}_{i+1}\cdots\msf{A}_N$ obtains the specific sequence of outcomes $\mbf{b}_{\ol{i}}=(\mbf{b}_1,\cdots,\mbf{b}_{i-1},\mbf{b}_{i+1},\cdots\mbf{b}_N)$.  The post-measurement state for system $\msf{A}_i$ is a stabilizer state $\ket{\psi_{\mbf{b}_{\ol{i}}}}^{\msf{A}_i}$ with stabilizer $\langle (-1)^{d_1} g_1,\cdots,(-1)^{d_{n_i}} g_{n_i}\rangle$, in which the bit values $d_l\in\{0,1\}$ are determined by the particular sequence $\mbf{b}_{\ol{i}}$.  Crucially, the outcomes $\mbf{b}_{\ol{i}}$ of the other parties only affect the $\pm 1$ factors on the generators $g_i$ of $\stab(\ket{\psi_{\mbf{b}_{\ol{i}}}})$.  Party $\msf{A}_i$ then introduces $m_i$ local ancilla systems and performs a Clifford across the joint state $\ket{\Psi_{\ol{x}_i}}^{\msf{A}_i}\ket{0}_1\cdots\ket{0}_{m_i}$.  This will generate a transformation in the stabilizer of the form
\begin{align}
&\langle (-1)^{d_1} g_1,\cdots,(-1)^{d_{n_i}} g_{n_i}\rangle\cup\langle Z_1\cdots Z_{m_i}\rangle\notag\\
\to\; &\langle (-1)^{d_1} g'_1,\cdots,(-1)^{d_{n_i}} g'_{n_i},h_1,\cdots,h_{m_i}\rangle.
\end{align}
Party $\msf{A}_i$ them measures the $t_i=n_i+m_i$ qubits.  With the $g_l'$ and $h_l$ being fixed by the initial state $\ket{\Psi}$ and choice of local Cliffords for all the parties, each outcome sequence $\mbf{b}_i=(b_1,\cdots,b_{t_i})$ for party $\msf{A}_i$ is obtained according to some conditional distribution $p(\mbf{b}_i|\mbf{d})$.  Since the values $\mbf{d}=(d_1,\cdots,d_{n_i})$ are a function of the sequence $\mbf{b}_{\ol{i}}$, we have
\begin{equation}
p(\mbf{b}_i|\mbf{d})=p(\mbf{b}_i|\mbf{b}_{\ol{i}}).
\end{equation}

An equivalent protocol for party $\msf{A}_i$ that generates the same outcome distribution with no ancilla involves measuring each of the generators $\{g_l\}_{l=1}^{n_i}$ directly on the state $\ket{\Psi_{\mbf{b}_{\ol{i}}}}^{\msf{A}_i}$, thereby learning the sequence $\mbf{d}$.  The party then generates the outcome sequence $\mbf{b}_i$ by classical sampling according to the conditional distribution $p(\mbf{b}_i|\mbf{d})$.  Since $p(\mbf{b}_1,\cdots,\mbf{b}_n)=p(\mbf{b}_i|\mbf{b}_{\ol{i}})p(\mbf{b}_{\ol{i}})$ and $p(\mbf{b}_i|\mbf{d})=p(\mbf{b}_i|\mbf{b}_{\ol{i}})$, this new protocol reproduces the correct global distribution.  As $\msf{A}_i$ was an arbitrary party, we can perform this modified protocol for all parties thereby removing any use of ancilla systems.

\end{proof}

\medskip

\section{classical model for Clifford k-network correlations}
In the main text, we argued that to prove Theorem 1, it suffices to consider graphs $\mc{G}$ in which each node is connected to exactly one other node, the two nodes are held by different parties, and their connecting edge represents a maximally entangled state $\ket{\Phi^+}$ shared between them.  We denote the corresponding graph state as $\ket{\Phi^+_{\mc{G}}}:=\bigotimes_{e\in E}\ket{\Phi^+_e}$.

Once the state $\ket{\Phi^+_{\mc{G}}}$ is distributed, party $\msf{A}_i$ performs a local Clifford gate $U^{(i)}$ on its $n_i$ qubits.  Then for every $j=1,\cdots,n$, the observable $Z_j$ is measured on $\ket{\Psi'}=\bigotimes_{i=1}^N U^{(i)}\ket{\Phi^+_{\mc{G}}}$.  Equivalently, we can define the Pauli operators $g_j:=\prod_{i=1}^N U^{(i)\dagger} Z_j \prod_{i=1}^N U^{(i)}$ and envision these as being performed directly on the network state $\ket{\Phi^+_{\mc{G}}}$; the generated distribution $p(\mbf{b}_1,\cdots,\mbf{b}_N)$ will be the same since
\begin{align}
    \bra{\Psi'}\bigotimes_{j=1}^{n}(\mbb{I}+(\text{-}1)^{b_j}Z_j)\ket{\Psi'}=\bra{\Phi^+_{\mc{G}}}\bigotimes_{j=1}^{n}(\mbb{I}\!+\!(\text{-}1)^{b_j}g_j)\ket{\Phi^+_{\mc{G}}}.\notag
\end{align}

Our goal is to construct a classical model that generates the distribution  $p(\mbf{b}_1,\cdots,\mbf{b}_N)$ using the same network structure $\mc{G}$.  To avoid confusion, in what follows, we will use an overline to distinguish classical random variables from hermitian operators; i.e. $\ol{B},\ol{X},\ol{Y}$, etc..  The following proposition states that a collection of binary random variables can be characterized in terms of their parity checks.
\begin{proposition}
\label{Prop:joint-pmf-correlation}
Let $\{\ol{B}_1,\cdots,\ol{B}_n\}$ be a collection of random variables with $\ol{B}_i$ having value $b_i\in\mbb{Z}_2$.  For any subset of indices $S\subset[n]$, let $\ol{B}_S:=\sum_{i\in S}\ol{B}_i$ denote the parity of variables specified by $S$, where the addition is taken modulo two.  Then the joint pmf $p(b_1,\cdots,b_n)$ for the $\ol{B}_i$ is uniquely determined by the probabilities $Pr\{\ol{B}_S=0\}$ for $S\in 2^{[n]}$.
\end{proposition}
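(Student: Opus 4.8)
The plan is to identify the parity probabilities $\Pr\{\ol{B}_S=0\}$ with the Fourier coefficients of the joint pmf over the group $\mbb{Z}_2^n$, and then recover $p$ by Fourier inversion (equivalently, by inverting the Walsh--Hadamard transform, whose coefficient matrix $H_n$ with entries $(-1)^{\sum_{i\in S}b_i}$ is symmetric and satisfies $H_n^2=2^nI$). For each subset $S\subseteq[n]$, the function $\chi_S(\mbf{b})=(-1)^{\sum_{i\in S}b_i}=(-1)^{\ol{B}_S}$ is a character of $\mbb{Z}_2^n$, so the first step is to relate its expectation to the given data. Since $\ol{B}_S$ takes only the values $0$ and $1$,
\[
\mbb{E}[\chi_S]=\Pr\{\ol{B}_S=0\}-\Pr\{\ol{B}_S=1\}=2\Pr\{\ol{B}_S=0\}-1 .
\]
Hence specifying all the numbers $\Pr\{\ol{B}_S=0\}$ for $S\in 2^{[n]}$ is equivalent to specifying all the character averages $\mbb{E}[\chi_S]$.

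For the second step I would invoke orthogonality of the characters, $\tfrac{1}{2^n}\sum_{S\subseteq[n]}\chi_S(\mbf{b})\,\chi_S(\mbf{b}')=\delta_{\mbf{b},\mbf{b}'}$, which follows from the factorization $\sum_{S\subseteq[n]}\prod_{i\in S}(-1)^{c_i}=\prod_{i=1}^n\bigl(1+(-1)^{c_i}\bigr)$ applied to $c_i=b_i+b_i'$: the product equals $2^n$ when $\mbf{b}=\mbf{b}'$ and $0$ otherwise. Expanding $\mbb{E}[\chi_S]=\sum_{\mbf{b}'}p(\mbf{b}')\,\chi_S(\mbf{b}')$ and summing against $\chi_S(\mbf{b})$ then yields the inversion formula
\[
p(b_1,\dots,b_n)=\frac{1}{2^n}\sum_{S\subseteq[n]}\mbb{E}[\chi_S]\,(-1)^{\sum_{i\in S}b_i} .
\]
Substituting the first-step expression for $\mbb{E}[\chi_S]$ writes $p$ as an explicit fixed linear combination of the quantities $\Pr\{\ol{B}_S=0\}$, which establishes that these probabilities determine the joint pmf uniquely.

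I do not expect a serious obstacle here, since the entire content is the invertibility of the character (Hadamard) transform; the only things to check carefully are the sign conventions and the $S=\emptyset$ term, which contributes the normalization with $\Pr\{\ol{B}_\emptyset=0\}=1$. An equally short alternative would avoid Fourier language altogether and argue by inclusion--exclusion: write each $\Pr\{\ol{B}_S=0\}$ as the sum of pmf values over the even-parity coset it determines, observe that the resulting $2^n\times 2^n$ linear system in the unknowns $p(\mbf{b})$ has coefficient matrix $H_n$, and conclude bijectivity from $H_n^2=2^nI$. Either route reduces the claim to the single fact that the linear map sending a pmf to its vector of parity probabilities is invertible.
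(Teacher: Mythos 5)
Your proof is correct, but it takes a genuinely different route from the paper. You recover $p$ explicitly by Fourier (Walsh--Hadamard) inversion over $\mbb{Z}_2^n$: the parity probabilities determine the character averages $\mbb{E}[\chi_S]=2\Pr\{\ol{B}_S=0\}-1$, and character orthogonality then gives the closed-form reconstruction $p(\mbf{b})=\tfrac{1}{2^n}\sum_{S}\mbb{E}[\chi_S](-1)^{\sum_{i\in S}b_i}$. The paper instead argues by induction on $n$ combined with a contradiction: given two distributions $P,Q$ with identical parity probabilities, the inductive hypothesis forces all $(n-1)$-bit marginals to agree; then assuming $P(x)<Q(x)$ at some even-parity string, repeated single-bit flips propagate strict inequalities ($P<Q$ on every even-parity string, $P>Q$ on every odd-parity string), contradicting equality of the full-parity check $\Pr\{\ol{B}_{[n]}=0\}$. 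Your approach buys more: it is constructive, yielding an explicit linear inversion formula rather than bare uniqueness, and it makes transparent that the map from pmfs to parity data is a bijection (the invertibility of $H_n$ via $H_n^2=2^nI$). The paper's argument buys elementarity --- it uses nothing beyond marginalization and counting, with no appeal to characters or matrix inverses --- at the cost of being indirect and nonconstructive. Both are complete proofs; the sign conventions and the $S=\emptyset$ normalization term you flagged are handled correctly in your formula, since $\Pr\{\ol{B}_\emptyset=0\}=1$ gives $\mbb{E}[\chi_\emptyset]=1$ as required.
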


A proof of this proposition is provided below.  Note that if $p(b_1,\cdots,b_n)$ is obtained by measuring a stabilizer state, then $Pr\{\ol{B}_S=0\}=Pr\{g_S=+1\}$ where $g_S:=\prod_{i\in S}g_i$, where $\ol{B}_i$ is the random variable corresponding to measurement outcome $b_i$.  Hence, by Proposition \ref{Prop:joint-pmf-correlation}, a classical model will correctly simulate the distribution $p(b_1,\cdots,b_n)$ if it yields random variables $\{\ol{B}_1,\cdots,\ol{B}_n\}$ whose parity checks $\ol{B}_S$ have the same distribution as the corresponding product Pauli observables $g_S$.  The $\{g_S\}_{S\in 2^{[n]}}$ form a commuting group of observables, and by properties of stabilizer measurements \cite{Nielsen-2000a}, $Pr\{g_S=+1\}=1$ if $g_S\in \text{stab}(\ket{\Phi^+_{\mc{G}}})$, $Pr\{g_S=-1\}=1$ if $-g_S\in \text{stab}(\ket{\Phi^+_{\mc{G}}})$, and  $Pr\{g_S=+1\}=\frac{1}{2}$ if $\pm g_S\not\in\text{stab}(\ket{\Phi^+_{\mc{G}}})$.  

It will be important to distinguish the elements $g_S$ for which $+g_S\in \text{stab}(\ket{\Phi^+_{\mc{G}}})$ from the $g_S$ for which $-g_S\in\text{stab}(\ket{\Phi^+_{\mc{G}}})$ \footnote{Note, it is not possible for both {$+g_s\in\text{stab}(\ket{\Phi^+_{\mc{G}}})$} and {$-g_s\in\text{stab}(\ket{\Phi^+_{\mc{G}}})$} since {$-\mbb{I}\not\in\text{stab}(\ket{\Phi^+_{\mc{G}}})$}}.  Let $G$ be the subgroup of the $\{g_S\}_{S\in 2^{[n]}}$ such that $\pm g_S\in \text{stab}(\ket{\Phi^+_{\mc{G}}})$.  An independent set of generators for $G$ can always be found, and among these generators, consider the $g_S$ such that $-g_S\in \text{stab}(\ket{\Phi^+_{\mc{G}}})$.  There exists an $n$-qubit Pauli $g_0$ that under conjugation flips the overall sign on each of these from $-1$ to $+1$, while leaving the $+1$ sign of the other generators unchanged (Prop. 10.4 in \cite{Nielsen-2000a}).  Thus, $g_0 g_S g_0^\dagger\in \text{stab}(\ket{\Phi^+_{\mc{G}}})$ for all $g_S\in G$, and if $g_S\in\text{stab}(\ket{\Phi^+_{\mc{G}}})$ then $[g_S,g_0]=0$ while if $-g_S\in \text{stab}(\ket{\Phi^+_{\mc{G}}})$ then $\{g_S,g_0\}=0$.

We now describe a classical model that reproduces these correlations.  For each $g_k$, let $\phi_X(g_k)\subset[n]$ denote the set of qubit registers for which $g_k$ acts upon by applying a Pauli-$X$, and similarly for $\phi_Y(g_k)$ and $\phi_{Z}(g_k)$.  Note that all the qubits identified in these three sets are held by the same party since $g_k$ is formed by applying a local unitary.  The classical model is then the following:
\begin{enumerate}
\item[(a)] For each edge $e\in\mc{G}$, let $\lambda_e=(\ol{X}_e,\ol{Y}_e,\ol{Z}_e)$ be a trio of independent Bernoulli random variables, each being distributed uniformly over $\mbb{Z}_2$.
\item[(b)] The classical output $\ol{B}_k$ simulating the measure outcome for qubit $k$ is given by 
\begin{align}\displaystyle\ol{B}_k=&\sum_{e}|e\cap\phi_X(g_k)|\cdot\ol{X}_e+\sum_{e}|e\cap\phi_Y(g_k)|\cdot\ol{Y}_e\notag\\
&+\sum_{e}|e\cap\phi_Z(g_k)|\cdot\ol{Z}_e+\mc{I}(\{g_k,g_0\}) \;\;\mod 2,\notag
\end{align}
\end{enumerate}
where $\mc{I}(\{g_k,g_0\})=1$ if $\{g_k,g_0\}=0$ and $\mc{I}(\{g_k,g_0\})=0$ if $[g_k,g_0]=0$.

Let us verify that this model correctly reproduces the distribution $p(\mbf{b}_1,\cdots,\mbf{b}_N)$.  As argued above, it suffices to show that $Pr\{\ol{B}_S=0\}=Pr\{g_S=+1\}$ for all $S\in 2^{[n]}$.  Observe that $\text{stab}(\ket{\Phi^+_{\mc{G}}})$ is generated by the disjoint pairs $\{X_iX_j, Z_iZ_j\}$ for every $(i,j)\in E$, which is the crucial property that imposes the graph structure onto this simulation.  The proof of correctness is completed after two more technical steps.

\noindent \textbf{Claim:} $\pm g_S\in \text{stab}(\ket{\Phi^+_{\mc{G}}})$ iff for every $e\in E$ the three sums 
$\sum_{k\in S}|e\cap\phi_X(g_k)|$, $\sum_{k\in S}|e\cap\phi_Y(g_k)|$, and $\sum_{k\in S}|e\cap\phi_Z(g_k)|$ all vanish when taken modulo $2$.  
\begin{quote}
    
\textbf{Proof:}  Let $g_S=\prod_{k\in S} g_k$.  Since $\text{stab}(\ket{\Phi^+_{\mc{G}}})$ is generated by $\{X_iX_j, Z_iZ_j\}$ for $(i,j)\in E$, then $\pm g_S\in \text{stab}(\ket{\Phi^+_{\mc{G}}})$ iff for every $(i,j)\in E$ either (i) the product $X_iX_j$ appears in $g_S$, or (ii) neither $X_i$ nor $X_j$ appear (and likewise for $Y_iY_j$ and $Z_iZ_j$).  In case (i) $X_i$ and $X_j$ must both appear in an \textit{odd} number of the constituent Pauli strings $\{g_k\}_{k\in S}$; hence $\sum_{k\in S}|e\cap\phi_X(g_k)|$ will always be an even sum (and likewise for $\sum_{k\in S}|e\cap\phi_Y(g_k)|$ and $\sum_{k\in S}|e\cap\phi_Z(g_k)|$).  In case (ii), $X_i$ and $X_j$ must both appear in an \textit{even} number of the constituent Pauli strings $\{g_k\}_{k\in S}$ and the same conclusion holds (and likewise for the $Y_i$ and $Y_j$). 

\end{quote}

\noindent \textbf{Claim:} if $g_S\in\text{stab}(\ket{\Phi^+_{\mc{G}}})$, then $\ol{B}_S=\sum_{k\in S}\ol{B}_k=\sum_{k\in S}\mc{I}(\{g_k,g_0\})$; by the definition of $g_0$, this sum equals $0$ iff $g_0$ commutes with $g_S$, and so $Pr\{\ol{B}_S=0\}=Pr\{g_S=+1\}\in\{0,1\}$.  

\begin{quote}
\textbf{Proof:}  First if $g_S\in\text{stab}(\ket{\Phi^+_{\mc{G}}})$, then using the definition of $\ol{B}_k$ given above we will have $\sum_{k\in S}\ol{B}_k=\sum_{k\in S}\mc{I}(\{g_k,g_0\})$ because we have just established that $\sum_{k\in S}|e\cap\phi_X(g_k)|$ is an even sum for each edge $e$ (and likewise for $\sum_{k\in S}|e\cap\phi_Y(g_k)|$, and $\sum_{k\in S}|e\cap\phi_Z(g_k)|$).  Second, you can directly check from the definitions that $\sum_{k\in S}\mc{I}(\{g_k,g_0\})$ vanishes ($\mod 2$) if $g_0$ commutes with $g_S$, and it is $1$ if $g_0$ anticommutes with $g_S$.  This is exactly what we want because if $g_S\in\text{stab}(\ket{\Phi^+_{\mc{G}}})$ then $Pr\{g_S=0\}=1$ and our model will then likewise have that $Pr\{\ol{B}_S=0\}=1$.   In the same way, if $-g_S\in\text{stab}(\ket{\Phi^+_{\mc{G}}})$ then $Pr\{g_S=1\}=1$ and our model will likewise have that $Pr\{\ol{B}_S=1\}=1$.  (Remark. The whole purpose of the $\mc{I}$ function is to ensure that $\ol{B}_S=1$ when $-g_S\in\text{stab}(\ket{\Phi^+_{\mc{G}}})$.  If we didn't introduce the $\mc{I}$ term our model would always have $\ol{B}_S=0$ when $\pm g_S\in\text{stab}(\ket{\Phi^+_{\mc{G}}})$.  We need a way to distinguish between the $\pm$ cases, and introducing $g_0$ along with the commuting/anti-commuting distinction does the trick.)
\end{quote}

\noindent The previous claim covers the case when $g_S\in\text{stab}(\ket{\Phi^+_{\mc{G}}})$.  On the other hand, if $g_s\not\in \text{stab}(\ket{\Phi^+_{\mc{G}}})$ then $\ol{B}_S$ will be a sum of at least one independent Bernoulli variable, and so $Pr\{\ol{B}_S=0\}=Pr\{g_S=+1\}=\frac{1}{2}$.  In conclusion, by Proposition \ref{Prop:joint-pmf-correlation}, any Clifford $k$-network distribution can be reproduced by this classical model, which completes the proof of Theorem 1.

\bigskip

\noindent\textbf{Proof of Proposition \ref{Prop:joint-pmf-correlation}:}

Let $P$ and $Q$ be probability distributions over $n-$bit strings having the same probabilities for every parity check. Proceeding by induction, assume that the proposition holds for any distributions over $(n-1)-$bit strings or fewer. Observe that the parity check on any marginal of P is also in the set of parity checks of P. Then the parity checks on the marginals $P'$ of $P$ are equivalent to those $Q'$ of $Q$ obtained by taking the same marginalization. Since $P'$ and $Q'$ are over $(n-1)-$bit strings or fewer, our assumption implies that $P' = Q'$. That is, the probability distributions of the marginals are the same if $P$ and $Q$ have the same parity checks. It remains to prove that $P=Q$.

Suppose $P\neq Q$. Then there exists a $n-$bit sequence $x$ such that $P(x)\neq Q(x)$. Without loss of generality, assume that $x$ has even parity, and $P(x)<Q(x)$. Flipping a single bit of $x$ yields another sequence $y$ with odd parity. Then $P(x)+P(y)$ is the marginalization of $P$ with respect to that bit. Then, from our assumption, $P(x)+P(y) = Q(x)+Q(y)$. This implies $P(y)>Q(y)$. Continuing to flip one bit at a time, we get $P(x')<Q(x')$ for every even-parity bit string $x'$, and $P(y')>Q(y')$ for every odd-parity string $y'$. Summing all the even-parity probabilities, we get $\sum_{x'}P(x')<\sum_{x'}Q(x')$. But the sum is a parity check over all $n-$bits, which our assumption demands to be the same for $P$ and $Q$. Thus, by contradiction, we have $P=Q$ whenever $P'=Q'$ for marginals over $(n-1)-$bit strings or fewer. 

The one-bit case $P(x_1)=Q(x_1)$ trivially serves as a base for our induction step. We conclude that, for every $n\in \mathbb{N}$, the parity check probabilities $Pr\{\ol{B}_S = 0\}$, for $S\in 2^{[n]}$, uniquely determine a pmf $P(b_1,...,b_n)$, where $\ol{B}_S$ is the parity of the variables specified by $S$.

\section{Clifford networks with mixed states}

Finally, we show that network nonlocality can be realized using \textit{mixed} stabilizer states.  However, we stress that our construction represents a disguised form of bipartite Bell nonlocality, as opposed to genuine network nonlocality, since it uses Fritz's network extension \cite{Fritz-2012a} of the magic square game \cite{Mermin-1990a, Cleve-2004a}.

\begin{proposition} 
Network nonlocality can be generated using mixed stabilizer states and Clifford gates.
\end{proposition}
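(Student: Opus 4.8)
The plan is to promote Mermin's magic square game \cite{Mermin-1990a} to an input-free network correlation via Fritz's construction \cite{Fritz-2012a}, realizing every quantum operation with Cliffords and paying for the missing inputs with mixed stabilizer states. First I recall that the magic square admits a perfect quantum strategy in which Alice and Bob share two copies of $\ket{\Phi^+}$ and, for row/column inputs $i,j$, measure commuting triples of Pauli observables; each readout amounts to a fixed Clifford $C_i$ (resp. $D_j$) followed by a computational-basis measurement, and since no local deterministic assignment can satisfy all nine parity constraints simultaneously, the game witnesses nonlocality. The game source $\ket{\Phi^+}^{\otimes 2}$ is a pure stabilizer state and every readout is Clifford, so once the inputs are supplied the entire winning strategy lives inside the stabilizer/Clifford world.

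Applying Fritz's reduction, I introduce auxiliary parties $\msf{C}_A,\msf{C}_B$ together with two input sources, independent of the game source, whose role is to deliver $i$ to $\msf{C}_A$ and $j$ to $\msf{C}_B$ so that conditioning on their reported outputs reinstates the original game. Here lies the main obstacle: in the input-free model each party applies a single fixed Clifford unitary, yet the magic square requires Alice's measurement basis to depend on $i$. A computational-basis-controlled Clifford $\sum_i \op{i}{i}\otimes C_i$ is not Clifford in general (already controlled-Hadamard fails to be), and one checks that a fixed Clifford driven by a classical control register can only inject input-dependent \emph{signs} onto a \emph{fixed} Pauli basis, never a genuine change of basis. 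Thus the naive input-controlled readout is forbidden, which is precisely the tension that pure stabilizer states cannot resolve, consistent with Theorem \ref{Thm:Clifford}.

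The resolution is to offload the basis choice out of the operation and into the state, exploiting the extra freedom of mixed stabilizer states. I would take the $\msf{C}_A$ source to be the mixed stabilizer state $\sum_i p(i)\,\op{\mu_i}{\mu_i}_{\msf{A}}\otimes\op{i}{i}_{\msf{C}_A}$, a convex combination of pure stabilizer states, where $\ket{\mu_i}=(C_i\otimes\mathbb{I})\ket{\Phi^+}^{\otimes2}$ is the stabilizer resource that teleports the Clifford $C_i$. Alice then applies a single fixed Clifford, namely the gate-teleportation gadget (Bell measurements) that injects whichever $C_i$ is present in her branch onto her game qubits, and measures in the computational basis, folding the Pauli byproducts into a classical relabeling of her outputs. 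All operations are fixed Cliffords, all sources are (mixed) stabilizer states, and each source touches only two parties, so the result is a Clifford $2$-network correlation built from mixed stabilizer states; the symmetric device handles Bob. This also sidesteps any alphabet-size worry: a mixed stabilizer state is by definition an arbitrary convex combination of pure stabilizer states, so a three-valued label poses no difficulty.

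Finally, I would establish network nonlocality by the same conditioning argument used in the converse direction of Proposition \ref{Prop:k-to-2}. Were the resulting correlation $\hat p(a,b,i,j)$ network local, then because the game source is shared only between Alice and Bob and is independent of the two input sources, conditioning on $\msf{C}_A=i$ and $\msf{C}_B=j$ would express Alice's outcome as a function of $i$ and a variable shared only with Bob, and symmetrically for Bob, yielding a genuine local hidden-variable model for the magic square game and contradicting its pseudo-telepathy. I expect the crux to be exactly the obstacle flagged above: certifying that input-dependent Pauli bases can be implemented with fixed Cliffords. The gate-teleportation device is what makes mixed stabilizer states strictly more powerful than pure ones in this setting, mirroring the strong non-convexity emphasized in the main text.
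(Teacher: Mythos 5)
Your proposal is correct and follows the same overall strategy as the paper: Fritz's extension of the magic square game, with the input-dependent measurement choice offloaded into a mixed stabilizer state shared with an auxiliary party and implemented by fixed Bell measurements plus classical Pauli correction. The difference lies in the teleportation gadget. The paper places the two-qubit eigenbasis state $\ket{e_x}$ of row/column $x$ directly in the mixed source and lets Alice's Bell measurements project her game qubits onto a Pauli-shifted element of that basis; the correction then hinges on a game-specific observation, flagged as crucial in the paper, that each row/column eigenbasis is invariant under local Paulis up to signs. You instead place the four-qubit Choi state $(C_i\otimes\mbb{I})\ket{\Phi^+}^{\otimes 2}$ in the source and use standard gate teleportation, handling byproducts by conjugating the Pauli error through the Clifford $C_i$. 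Your gadget is more generic---it works for any Clifford readout, with no Pauli-invariance property of the measurement bases required---at the cost of two extra qubits per branch; the paper's is leaner but leans on the special structure of the magic-square bases. One small repair is needed: as written, your source gives Alice only $\ket{\mu_i}$ while the classical label $i$ goes to $\msf{C}_A$, yet computing the byproduct correction $C_i P C_i^\dagger$ (and the game-format answer triple) requires knowing $i$. Either add a classical copy $\op{i}{i}$ on Alice's side of the source, as the paper does with its register $\msf{A}_{\msf{X}}$ (the state remains a mixed stabilizer state), or defer the relabeling to the conditioning argument, where it is a deterministic local function of $(a,i)$ and hence preserves the local-hidden-variable structure; with either patch your final contradiction with the pseudo-telepathy property of the game goes through.
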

\begin{proof}
  In a slight abuse of notation, we consider a scenario in which Alice and Bob each hold locally four systems $(\msf{A}_1,\msf{A}_2,\msf{A}_{12},\msf{A}_{\msf{X}})$ and $(\msf{B}_1,\msf{B}_2,\msf{B}_{12},\msf{B}_{\msf{Y}})$, respectively.  Systems $\msf{A}_1$, $\msf{A}_2$, $\msf{B}_1$, and $\msf{B}_2$ are each qubits, while $\msf{A}_{12}$ and $\msf{B}_{12}$ are two-qubit systems.  Systems $\msf{A}_{\msf{X}}$ and $\msf{B}_{\msf{Y}}$ are classical registers.  In the magic square game, Alice and Bob share two Bell states $\ket{\Phi^+}^{\msf{A}_1\msf{B}_1}\ket{\Phi^+}^{\msf{A}_2\msf{B}_2}$, with $\ket{\Phi^+}=\frac{1}{\sqrt{2}}(\ket{00}+\ket{11})$.  They perform local measurements specified by the following $3\times 3 $ grid:
\[
\begin{pmatrix}
 \mbb{I}\otimes Z&Z\otimes\mbb{I}&Z\otimes Z\\
 X\otimes\mbb{I}&\mbb{I}\otimes X&X\otimes X\\
 -X\otimes Z&-Z\otimes X&Y\otimes Y
\end{pmatrix}.
\]
In the game, Bob is asked to make a measurement on systems $\msf{B}_1\msf{B}_2$ described by one of these nine observables.  Alice is asked to measure on systems $\msf{A}_1\msf{A}_2$ all the observables along one of the rows or along one of the columns.  Thus, Alice returns a triple of $\pm 1 $ values.  Notice that each row/column defines a particular ``context'' for Bob's observable.  Consequently, the game can be equivalently described by having Alice measure in one of six orthogonal bases, each one being a common eigenbasis to the observables along a row or column.  After measuring in the given basis, she can determine the triple of $\pm 1$ values to report using her measurement outcome.  For example, the observables in the second column have a common eigenbasis of $\{\ket{0}\ket{+},\ket{0}\ket{-},\ket{1}\ket{+},\ket{1}\ket{-}\}$.  When she projects in this basis, she outputs three values of $\pm 1$ based on whether the projected state is a $\pm 1$ eigenstate of $Z\otimes\mbb{I}$, $\mbb{I}\otimes X$, and $-Z\otimes X$, respectively.  The crucial observation for the protocol we describe below is that each eigenbasis defined by a row/column in the above square is invariant under local Paulis (up to overall $\pm 1$ phases). 

Consider a line network of four nodes: $\msf{X}$, $\msf{A}$, $\msf{B}$, $\msf{Y}$.  Between $\msf{A}$ and $\msf{B}$ is shared the state $\ket{\Phi^+}^{\msf{A}_1\msf{B}_1}\ket{\Phi^+}^{\msf{A}_2\msf{B}_2}$.  Let $x\in\{1,\cdots,6\}$ enumerate the three rows and three columns in the grid.  Then, between $\msf{X}$ and $\msf{A}$ is shared the mixed state
\begin{equation}
\rho^{\msf{X}\msf{A}}=\frac{1}{6}\sum_{x=1}^6\op{x}{x}^{\msf{X}}\otimes\op{x}{x}^{\msf{A}_{\msf{X}}}\otimes\op{e_x}{e_x}^{\msf{A}_{12}},
\end{equation} 
where $\ket{e_x}^{\msf{A}_{12}}$ is any one of the four basis elements belonging to the common eigenbasis of row/column $x$.  For instance in the example above, we could take $\ket{e_5}=\ket{0}\ket{+}$, with $x=5$ denoting the second column of the grid.  Likewise for Bob, let $y\in\{1,\cdots,9\}$ enumerate all the elements in the grid.  Then, between $\msf{Y}$ and $\msf{B}$ is shared the mixed state
\begin{align}
\rho^{\msf{Y}\msf{B}}=\frac{1}{9}\sum_{y=1}^9\op{y}{y}^{\msf{Y}}\otimes\op{y}{y}^{\msf{B}_{\msf{Y}}}\otimes\frac{1}{4}\left(\mbb{I}\otimes\mbb{I}+g_y\right)^{\msf{B}_{12}},
\end{align}
where $g_y$ is the Pauli observable in cell $y$ of the grid.  Alice and Bob  each perform two Bell measurements across systems $\msf{A}_{12}:\msf{A}_1\msf{A}_2$ and $\msf{B}_{12}:\msf{B}_1\msf{B}_2$, respectively.  They also measure systems $\msf{A}_{\msf{X}}$ and $\msf{B}_{\msf{Y}}$, respectively, so that they know which observables they are measuring on the quantum systems.  This information is correlated with parties $\msf{X}$ and $\msf{Y}$, respectively, since they have access to the classical registers.

For Alice, the Bell measurements will project system $\msf{A}_{12}$ onto $\msf{A}_1\msf{A}_2$ up to some Pauli errors, and likewise system $\msf{B}_{12}$ gets projected onto $\msf{B}_1\msf{B}_2$ up to Pauli errors.  In the case of Alice, as noted above, the Pauli errors will just alter which one of the four basis states gets projected.  Since Alice knows the basis $x$, she can always correct these Pauli error classically.   To see how this works, it is easiest to consider a concrete example.  %Let $x=5$ correspond to the second column in the magic square and let $\ket{e_x}=\ket{0}\ket{+}$.  
Suppose that Alice measures $x=5$ on register $\msf{A}_{\msf{X}}$ with $\ket{e_x}=\ket{0}\ket{+}$, while her Bell measurements return an $X$ error on $\msf{A}_1$ and a $Z$ error on $\msf{A}_2$.  The effective projection on systems $\msf{A}_1\msf{A}_2$ is then $(Z\otimes X)\op{e_x}{e_x}(Z\otimes X)=\op{1}{1}\otimes\op{-}{-}$, which is another one of the elements in basis $x=5$.  Hence, to correctly play the magic square game, Alice needs to only return the values corresponding to a $\ket{1}\ket{-}$ projection; i.e. she returns $(-1,-1,-1)$.  A similar correction is performed by Bob based on his outcome of the Bell measurement.  In total, Alice and Bob correctly reproduce the quantum strategy of playing the magic square game, and their choices of measurements are independently correlated with separate nodes $\msf{X}$ and $\msf{Y}$.  The generated distribution is thus quantum $2$-network nonlocal \cite{Fritz-2012a}. 
\end{proof}

\end{document}